\definecolor{teal}{rgb}{0.0,.5,.5}
\DeclareMathOperator*{\va}{\mathnormal{x}}
\DeclareMathOperator*{\vb}{\mathnormal{y}}
\DeclareMathOperator*{\vc}{\mathnormal{z}}
\DeclareMathOperator*{\vab}{\mathnormal{xy}}
\newcommand{\T}{\mathcal{T}}
\newcommand{\ket}[1]{|#1\rangle}
\begin{document}
\counterwithin{lstlisting}{section}

\title{A Query-Efficient Quantum Algorithm
for Maximum Matching on General Graphs}
\titlerunning{Quantum Maximum Matching}

\author{Shelby Kimmel\inst{1} \and
R. Teal Witter\inst{2}}

\authorrunning{S. Kimmel and R.\,T. Witter}

\institute{Department of Computer Science, Middlebury College, USA
\email{skimmel@middlebury.edu} \and
Department of Computer Science and Engineering,
NYU Tandon School of Engineering, USA
\newline
\email{rtealwitter@nyu.edu}}

\maketitle

\begin{abstract}
We design quantum algorithms for maximum matching.
Working in the query model, in both adjacency matrix
and adjacency list settings, we improve on the 
best known algorithms for general graphs, matching 
previously obtained results for bipartite graphs.
In particular, for a graph with $n$ vertices and $m$ edges,
our algorithm makes $O(n^{7/4})$ queries
in the matrix model
and $O(n^{3/4}(m+n)^{1/2})$ queries
in the list model.
Our approach combines Gabow's classical maximum matching 
algorithm [Gabow, \textit{Fundamenta Informaticae}, '17] 
with the guessing tree method
of Beigi and Taghavi [Beigi and Taghavi, \textit{Quantum}, '20].
\keywords{Maximum matching  \and Quantum algorithm.}
\end{abstract}

\section{Introduction}
A matching is a set of non-adjacent edges in an
undirected graph.
In the maximum matching problem, 
one tries to find the matching with the largest number of edges.
Finding the maximum matching in a graph is a problem
that is both of fundamental and practical importance.
Its practical applications range from kidney exchange 
to scheduling to characterizing chemical structures
\cite{roth_pairwise_2005,fujii_optimal_1969,may_cheminformatics_2015}.
As a fundamental problem, it has stimulated
a string of algorithmic developments, 
such as the use of blossoms and dual variables
\cite{edmonds_paths_1965},
which have been useful in the development 
of a broad range of algorithms.
Additionally, maximum matching in general
(bipartite and non-bipartite) graphs is notable for the
difficulty researchers have had in finding
a simple and correct algorithm
for this seemingly straightforward problem
\cite{micali_osqrtve_1980,gabow_weighted_2017}. 

We study maximum matching in the query setting:
We are given a graph $G$ as an adjacency matrix or 
adjacency list and the goal
is to find a maximum matching with as few
queries as possible.
A query in the matrix model takes the form,
``Do vertices $\va$ and $\vb$ share an edge?''
A query in the list model takes the form,
``What is the $i$th vertex adjacent to vertex $\va$?''

The best classical algorithms for maximum matching
solve the problem in $O(m \sqrt{n})$ time for both
bipartite and general graphs
\cite{gabow_weighted_2017,hopcroft_n52_1973,micali_osqrtve_1980,vazirani_simplification_2013}.
The query complexity of these classical algorithms
is the trivial $O(n^2)$ in the matrix model and
$O(m)$ in the list model.
In fact, using an adversarial argument,
it is easy to see that any classical algorithm must
query all pairs of vertices or all edges to
find a maximum matching in the worst case.

Using quantum computers, however, we can do better.
Lin and Lin found a quantum algorithm that solves
maximum matching on a bipartite graph in
$O(n^{7/4})$ queries in the matrix model
\cite{lin_upper_2016}.
Beigi and Taghavi created an algorithm that uses
$O(n^{3/4}\sqrt{m+n})$ queries in the list model for bipartite graphs
\cite{beigi_quantum_2020}, which in the worst case when 
$m=\Omega(n^2)$, matches the result of Lin and Lin.
Both results use the guessing tree method:
Lin and Lin introduced the method for functions 
with binary input and
Beigi and Taghavi generalized it to functions 
with non-binary input.

Our contribution is a quantum maximum matching algorithm for general graphs
that uses $O(n^{7/4})$ queries in the matrix model and 
$O(n^{3/4}\sqrt{m+n})$ in
the list model, matching the prior results for bipartite graphs.
We combine two powerful techniques to obtain our result:
Beigi and Taghavi's
guessing tree method and Gabow's
relatively simple algorithm for maximum matching
\cite{beigi_quantum_2020,gabow_weighted_2017}. 
The key technical issues in combining these two approaches are a careful 
accounting of which steps of the classical algorithm actually
require queries, slight modifications to the classical algorithm that help us bound the number of queries, and a well-chosen definition of the guessing 
scheme for the decision tree used in the guessing tree method.

The previous best known quantum algorithms for
maximum matching on general graphs ran in
trivial query complexity.
Ambainis and Špalek designed algorithms
for general maximum matching
that run in $O(n^{5/2} \log n)$ time
in the matrix model
and $O(n^2 (\sqrt{m/n} + \log n) \log n)$ time
in the list model
\cite{ambainis_quantum_2006}.
Dörn found an algorithm for general maximum matching
that runs in $O(n^2 \log^2 n)$ time in the matrix model
and $O(n\sqrt{m} \log^2 n)$ time in the list
model \cite{dorn_quantum_2009}.

While our result unifies the
cases of bipartite and general graphs,
there remains a gap between
our upper bound and the best known lower bound.
Berzina et al. and Zhang found a
lower bound for maximum matching of $O(n^{3/2})$
\cite{berzina_quantum_2004,zhang_power_2005}.
Interestingly, Zhang proved that Ambainis techniques
(one of the most useful methods for finding 
quantum lower bounds) cannot improve the
current lower bound
\cite{ambainis_quantum_2002,zhang_power_2005}.

\subsection{Graph Theory}
\label{sec:intro_graph}
Given an undirected graph $G$,
we denote by $V(G)$ the set of vertices and 
$E(G)$ be the set of edges of $G$.
Call $n = |V(G)|$ the number of vertices 
in a graph and $m = |E(G)|$
the number of edges.
We represent an edge between
vertices $\va$ and $\vb$ as $\vab $.

We denote the \textit{symmetric difference} of two graphs
$G_1$ and $G_2$ as $G_1 \oplus G_2$.
Then $V(G_1 \oplus G_2)$ is $V(G_1) \cup V(G_2)$ and $\vab \in E(G_1 \oplus G_2)$ if and only
if $\vab \in E(G_1)$ but $\vab \notin E(G_2)$
or $\vab \in E(G_2)$ but $\vab \notin E(G_1)$.
We may think of the symmetric difference as the graph equivalent
of addition modulo 2.

A \textit{matching} $M$ is a set of non-adjacent edges of $G$.
That is, if $\vab $ is in $M$,
then there is no other edge connected to $\va$ or $\vb$ in $M$.
The solid edges in \Cref{fig:graph} form a matching.
A \textit{maximum matching} on $G$ is a matching with the most edges
of any matching on $G$.
We call a vertex a \textit{free vertex} if it is not on any edge in
matching $M$, while if a vertex is not free we called it \textit{matched}. A \textit{matched edge} is in a matching while
an \textit{unmatched edge} is not.

A \textit{blossom} is a cycle of length $2k + 1$
with $k$ matched edges and $k + 1$ unmatched edges.
The edges alternate between matched and unmatched edges with the
exception of the two edges connected to the root of the blossom.
In \Cref{fig:graph}, the blossom has $2(2) + 1 = 5$ edges
and the root is the vertex in the cycle closest to the
left free vertex.

\begin{figure}[h]
\centering
\includegraphics[scale=.25]{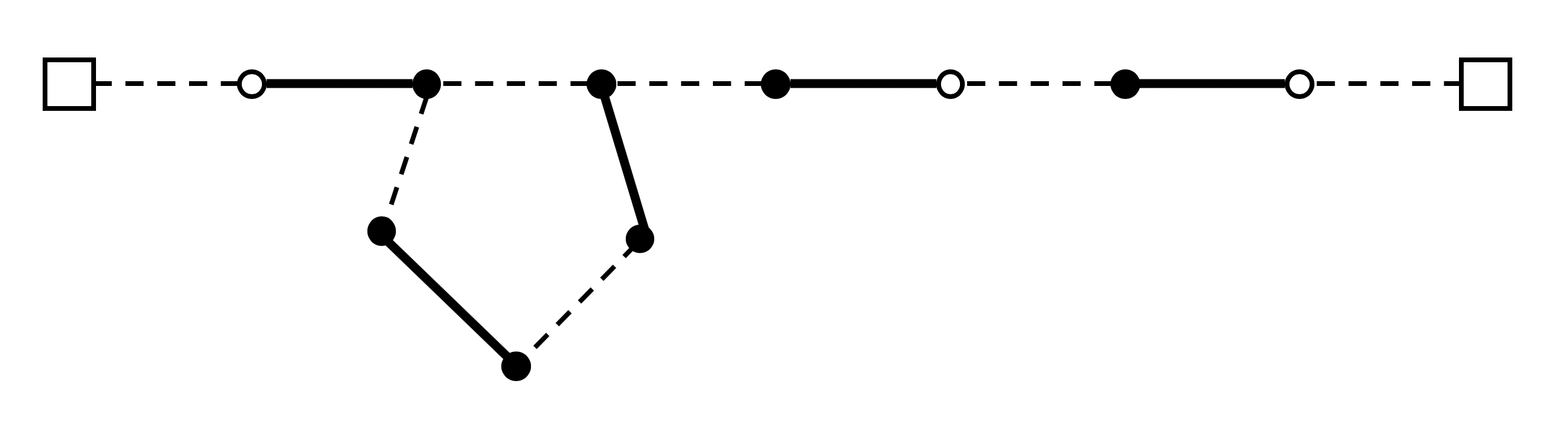}
\caption{Example graph with a matching where the solid
lines are edges in the matching and the dotted lines are
edges not in the matching but in the underlying graph.
The free vertices are squares, the outer vertices
(excluding the free vertices) are filled circles,
and the inner vertices are hollow circles.}
\label{fig:graph}
\end{figure}

An \textit{augmenting path} is a set of 
edges between two free vertices that
alternates between matched and unmatched edges. 
In \Cref{fig:graph}, the horizontal edges connecting 
the two free vertices (represented as squares) is not
an augmenting path because there are two 
consecutive unmatched edges.
A \textit{sap} (shortest augmenting path) is an augmenting
path with the fewest edges of any augmenting path.
In \Cref{fig:graph}, the augmenting path along the blossom
between the free vertices forms a sap.
We call a vertex \textit{inner} with respect to an
augmenting path if it is closer than its matched
pair (the vertex with which it shares a matched edge) to the
closest free vertex.
Here `closeness' is measured by the number of edges on the augmenting path between the vertex in question and the closest free vertex.
Inner vertices are illustrated in \Cref{fig:graph}
as hollow circles.
All other vertices---including free vertices, all vertices on a blossom,
and vertices adjacent to an edge 
equidistant between two free vertices---are \textit{outer}.
Whether a vertex is inner or outer may change as the augmenting paths grow:
An inner vertex can become outer (e.g. if it becomes part of a blossom)
but an outer vertex cannot become inner.

Notice that we can use the partial matching and sap in
\Cref{fig:graph} to get a
larger (in this case maximum) matching.
We simply take the symmetric difference of the partial
matching and augmenting path.
That is, we include every unmatched edge
(since it is in augmenting path
but not the partial matching) and remove every matched edge
(since it is in both the augmenting
path and the partial matching).
The result is a larger matching 
where each vertex with an edge in the partial matching
has an edge in the larger matching and the previously
free vertices also have matched edges.

\subsection{Query Complexity}
\label{sec:guess}

%We can access information about our graph $G$ in one of two ways, either via an adjacency
%matrix or an adjacency list. In either case, we have access to a function that we can
%query (i.e. evaluate at various inputs) in order to learn where edges are in the graph. 

In both the list and matrix models,
we learn the edges of $G$ by
querying (i.e. evaluating at various inputs)
a function.
We assume that $G$ is a subgraph of the complete graph of $n$ vertices, labeled by elements of 
$[n]=\{0,1,\dots,n-1\}$,
where we do not know which edges of the complete graph are part of $G$ and which are
not.\footnote{One can easily extend to the case that $G$ is a subgraph of a multigraph;
we consider complete graphs only for simplicity.} Then
in the case of the adjacency matrix, we have a function $E_M:[n]\times [n]\rightarrow \{0,1\}$, where
$E_M(\va, \vb)=1$ if and only if the edge $\vab \in E(G)$.

In the case of the adjacency list, we have a function $E_L:[n]\times [n]\rightarrow [n]\cup\{\texttt{null}\}$ where
\begin{equation}
    E_L(\va,i)=
    \begin{cases}
    \vb &\textrm{ if $\vb$ is the }i\textrm{th neighbor of } \va\\
    \texttt{null}&\textrm{ if u has less than }i\textrm{  neighbors}
    \end{cases}
    \nonumber.
\end{equation}

Given access to one of these functions, the classical bounded error query
complexity of maximum matching is the number of times we must evaluate the
function in order to find a maximum matching with high probability.

In the quantum model, we are given access to unitaries called oracles that
encode the information of the functions $E_M$ and $E_L$. In the adjacency
matrix model, we have access to an oracle $O_M$ that acts on the Hilbert space
$\mathbb{C}^{n}\times \mathbb{C}^{n}\times\mathbb{C}^2$ such that for an edge
$e=\vab$, and $b\in \{0,1\}$, $O_{M}\ket{e}\ket{b}=\ket{e}\ket{b\oplus E_M(e)}$,
where addition is modulo 2.
In the adjacency list model, we have access to an oracle $O_L$ that acts on
the Hilbert space $\mathbb{C}^{n}\times \mathbb{C}^{n}\times\mathbb{C}^{n+1}$,
where for a vertex $\va$, index $i$, and $j \in[n+1]$, acts as
$O_L\ket{\va,i}\ket{j}=\ket{\va,i}\ket{j \oplus E_L(\va,i)}$, where addition is modulo
$n+1.$

Given access to one of these oracles, the quantum bounded error query
complexity of maximum matching is the number of times we must apply the oracle
(as part of a quantum algorithm) in order to find a maximum matching with high
probability.

Given a classical query algorithm,
one can create a decision tree that
describes the sequence and outcomes 
of queries that are made throughout the
algorithm. Each non-leaf vertex in the 
tree represents a query, and the outgoing 
edges from a vertex represent possible
outcomes of the query.
Sets of query outcomes may be grouped into
a single edge (provided future decisions 
made by the algorithm are independent of 
which particular query outcome within the set occurred).
Given such a decision tree, 
one can create a guessing scheme. 
A guessing scheme is a labeling of 
edges such that exactly one outgoing edge
from each vertex
is labelled as the guess. If
the outcome of a query matches the guess, 
we say that the guessing scheme
correctly guessed the outcome of that query.
Otherwise, we say it was an
incorrect guess. 

Given such a decision tree and guessing algorithm, it is possible to design a
quantum algorithm:

\begin{theorem}
[Guessing Tree \protect{\cite{beigi_quantum_2020}}]
\label{thm:guess}
    For positive integers $k$, $\ell$, and $m$,
    let $f:D_f \rightarrow [k]$ be a function with $D_f \subseteq [\ell]^m$.
    Let $\T$ be a decision tree for $f$
    with a guessing scheme and
    let $T$ be the depth of $\T$. Define
    $I$ as the maximum number of incorrect guesses
    in any path from the root to a leaf of $\T$.
    Then the bounded error quantum query complexity of evaluating 
    $f$ is upper bounded by $O(\sqrt{TI})$.
    The quantum space complexity is $O(m)$.
\end{theorem}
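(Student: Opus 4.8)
The plan is to derive the bound from the negative-weight (dual) adversary characterisation of bounded-error quantum query complexity: up to a constant factor, $Q(f)$ equals the minimum, over all assignments of vectors $\{u_{x,j}\}_{x\in D_f,\,j\in[m]}$ satisfying $\sum_{j:\,x_j\neq y_j}\langle u_{x,j},u_{y,j}\rangle=1$ for every $x,y\in D_f$ with $f(x)\neq f(y)$, of $\max_{x}\sum_{j}\|u_{x,j}\|^2$; this form of the bound applies verbatim to non-Boolean domains and ranges, so no reduction to the Boolean case is needed. The entire problem therefore reduces to reading a feasible solution of value $O(\sqrt{TI})$ off of $\T$ and its guessing scheme.

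To build it I would first normalise $\T$: assume no root-to-leaf path queries the same coordinate twice (a repeated query is redundant and can be dropped without changing $T$ or $I$), and assume each internal node has exactly two outgoing edges, the guess edge and the ``not the guess'' edge, replacing a node with more children by a small binary gadget over its children. With $\T$ in this form, give every node $v$ its own basis vector $\ket v$, let $j=j(v)$ be its query index, and to an input $x$ whose path $P(x)$ runs through $v$ assign to the block $u_{x,j(v)}$ the vector $(I/T)^{1/4}\ket v$ if $x$ follows the guess edge at $v$ and $(T/I)^{1/4}\ket v$ if $x$ follows the surprise edge (and the zero vector in that block if $v\notin P(x)$). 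The only property of the two weights that matters is that their product equals $1$.

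Correctness hinges on a single combinatorial \emph{divergence lemma}: if $f(x)\neq f(y)$ then $P(x)$ and $P(y)$ leave the root together, agree up to some node $v$, and split at $v$; there $x_{j(v)}\neq y_{j(v)}$, and since at most one outcome at $v$ equals the guess, exactly one of $x,y$ takes the guess edge at $v$ and the other takes the surprise edge. Because queries are not repeated along a path, the supports of $u_{x,j}$ and $u_{y,j}$ intersect in a common basis vector only when $j=j(v)$, and that vector is $\ket v$; hence $\sum_{j:\,x_j\neq y_j}\langle u_{x,j},u_{y,j}\rangle$ collapses to the single term $(I/T)^{1/4}(T/I)^{1/4}=1$, so the constraint holds. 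For the objective, along any path the guess edge is used at most $T$ times and the surprise edge at most $I$ times, so for every $x$ we get $\sum_{j}\|u_{x,j}\|^2=\sum_{v\in P(x)}\|u_{x,j(v)}\|^2\le T\sqrt{I/T}+I\sqrt{T/I}=2\sqrt{TI}$; this is a feasible solution of value $2\sqrt{TI}$, which gives the claimed query bound.

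I expect two points to be the real work. The first, and main, obstacle is the reduction to binary branching: merging all non-guess outcomes of a query into one edge is legitimate only when the algorithm's later behaviour does not depend on which of them occurred, so in general a $d$-ary node must be replaced by a binary gadget, and one has to argue that a suitable gadget (e.g.\ a balanced tree over the children, with the guess edges kept ``cheap'') does not spoil the relation between $T$, $I$ and the objective value; the per-node analysis above breaks if one tries to keep a $d$-ary node, since forcing $d$ vectors to pairwise inner product $1$ already costs roughly a factor $d$ in the surprise-side norms. The second is the space bound: the vectors above live in a space whose dimension is essentially the number of nodes of $\T$, so the $O(m)$-space claim does not come from the semidefinite program but from the explicit algorithm underlying it --- repeatedly use amplitude amplification to find the next incorrect guess along the current guessed path and move there --- which stores only a current index in $[m]$, a current outcome in $[\ell]$, and $O(1)$ ancillae; squeezing this algorithm down to $O(\sqrt{TI})$ queries rather than the naive $O(I\sqrt{T})$ is where variable-time amplitude amplification, equivalently the bomb-query accounting of Lin and Lin, enters.
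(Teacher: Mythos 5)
Your overall strategy --- reading a feasible dual-adversary solution off $\T$, with block $\ket v$ at each node and weights $(I/T)^{1/4}$ on guessed edges versus $(T/I)^{1/4}$ on surprises so that the divergence node contributes exactly $1$ and the objective telescopes to $2\sqrt{TI}$ --- is exactly the route Beigi and Taghavi take, and your divergence lemma and objective calculation are correct \emph{for binary branching}. The genuine gap is the step you yourself flag as the main obstacle, and the fix you propose for it would not work. A $d$-ary node cannot be replaced by a binary gadget in the query model: a query to coordinate $j$ returns the full value in $[\ell]$, so any gadget must re-query $j$ along a chain of nodes, and then for two inputs $x,y$ that separate inside the gadget the sum $\langle u_{x,j},u_{y,j}\rangle$ picks up the (nonnegative, generally positive) contributions of the shared prefix blocks in addition to the $1$ from the divergence block, violating the constraint; trying to zero out those prefix blocks instead kills the required inner product for pairs that exit the gadget at different nodes. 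And, as you correctly observe, keeping the node $d$-ary in the \emph{symmetric} single-family SDP costs a factor of $d$ (up to $\ell$) in the surprise-side norms, which would ruin the bound. The resolution is to use the non-symmetric dual adversary with two families and constraint $\sum_{j:x_j\neq y_j}\langle u_{x,j},w_{y,j}\rangle=1$: at a node with weight parameter $\alpha^2=\sqrt{I/T}$, send the guessed outcome to $\alpha\ket{0}$ in both families and every non-guessed outcome $a$ to $\alpha^{-1}\ket{0}+\sqrt{\alpha^{-2}-1}\,\ket{1}$ in one family and $\alpha^{-1}\ket{0}-\sqrt{\alpha^{-2}-1}\,\ket{1}$ in the other. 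Then \emph{every} cross inner product between distinct outcomes equals $1$ regardless of the arity, the guessed norm is $\alpha^2$ and the surprise norm is at most $2\alpha^{-2}$, and the objective is again $O(\sqrt{TI})$ with no dependence on $\ell$. This two-family construction is the missing idea; without it (or an equivalent) your argument only proves the theorem when every query is effectively two-valued.

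Two smaller points. First, your "normalisation" that no coordinate is queried twice on a path is harmless but also unnecessary once the constraint is computed per block: for $j$ queried at several nodes of the common prefix of $P(x)$ and $P(y)$ with $x_j=y_j$, those terms are excluded by the condition $x_j\neq y_j$ anyway, and the divergence node is unique. Second, the $O(m)$ space claim does not need the amplitude-amplification algorithm you sketch (which is a different, and itself unproven, route to the query bound); it follows because a vertex of $\T$ is addressed by its sequence of at most $T\le m$ query outcomes, so the Hilbert space the dual-adversary algorithm walks on can be indexed by $O(m)$ registers of size $\ell$.
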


See Beigi and Taghavi \cite{beigi_quantum_2020} for extensive applications
of \Cref{thm:guess}.
Observe that the size of the image of the function $f$ does
not affect the query complexity or space complexity of the
quantum algorithm that evaluates it.
We use this fact to specify the maximum matching
(all $O(n)$ edges)
in the leaves of our decision tree.

\section{Result}

We use Gabow's algorithm to find
a maximum matching in graph $G$.
Gabow's algorithm runs in two phases.
(The high level pseudocode is in
\Cref{list:overview}.)
In the first phase, the algorithm finds
all the edges in $G$ that are on saps.
In the second phase, the algorithm finds
disjoint saps that are used to
augment the partial matching.
Since a maximal set of disjoint saps are found
in each iteration, there are at most
$O(\sqrt{n})$ iterations \cite{hopcroft_n52_1973}.

\begin{minipage}{\linewidth}
\begin{lstlisting}[caption={Gabow's Algorithm \cite{gabow_weighted_2017}},label=list:overview,mathescape=true, numbers=left]
$M \leftarrow \emptyset$ /* M is the current partial matching */
loop
  /* Phase 1 */
  for every pair of vertices $\va,\vb$ do
    if $\vab \in M$ then $w(\va,\vb) \leftarrow 2$ else $w(\va,\vb) \leftarrow 0$
  Listing 2/3 (matrix/list model) to find pairs of vertices on saps
  if no augmenting path is found then
    break /* $M$ has maximum cardinality */
  
  /* Phase 2 */
  Listing 4/5 (matrix/list) to create maximal set of disjoint saps $P$
  augment $M$ by the paths of $P$
\end{lstlisting}
\end{minipage}

The key idea behind the algorithm
is the use of dual variables associated with each vertex, 
and which we denote using 
a function $d:V\rightarrow \mathbb{Z}$.
Each dual variable is initialized to 1.
A pair of vertices
is tight if the sum of the dual variables
$d(\va)$ and $d(\vb)$ is $w(\va,\vb)$.
Recall from \Cref{list:overview} that
$w(\va,\vb)$ is 2 if $\vab$ is a matched edge and
0 otherwise.
Intuitively, a pair of vertices is tight only if 
their shared edge could be part of a sap
\cite{gabow_weighted_2017}.

We use Gabow's maximum matching algorithm
to construct a decision tree that finds
a maximum matching.
To apply \Cref{thm:guess} to
the decision tree, we must design a guessing scheme.
In the matrix model, 
we always guess that the edge we are querying is not present.

In the list model, when we are querying the $i^{th}$ vertex
adjacent to $\va$ (call it $\vb$),
our guess depends on the phase of the algorithm.
In the first phase, we guess that $\va$ and
$\vb$ do \textit{not} fit either of the following criteria:
\begin{itemize}
    \item $\va$ and $\vb$ are tight, $\va$ and $\vb$ are not from
    the same blossom, and $\vb$ has not yet been found
    (i.e. added to $S$, see \Cref{list:bfs_list}), or
    \item $\va$ and $\vb$ are tight, $\va$ and $\vb$ are not from
    the same blossom, and $\vb$ is outer.
\end{itemize}
In the second phase, we guess that $\va$ and $\vb$ do
\textit{not} fit either of the following criteria:
\begin{itemize}
    \item $\va$ and $\vb$ are tight,
    $\va$ and $\vb$ do not share a matched edge, and
    $\vb$ has not yet been found (i.e. added to $S'$, 
    see \Cref{list:dfs_list}), or
    \item $\va$ and $\vb$ are tight,
    $\va$ and $\vb$ do not share a matched edge, and
    $\va$ and $\vb$ form a blossom.
\end{itemize}
If our query to the list returns $\texttt{null}$,
that is, we have reached the end of a vertex's adjacency list,
we say that our guess is incorrect.

In the list model, while there might be multiple 
outcomes of a single
query that satisfy the correct guess conditions, 
we will see that the subsequent
behavior of the algorithm is the same, 
so we group all such correct outcomes into a 
single edge in our decision
tree, as described in \Cref{sec:guess}.

%Observe that in the list model, the null outcome ($n$) is a correct guess. In fact, even once we know that a vertex $\va$ has no additional neighbors, we keep querying the adjacency list until we reach the end of the list. Because we only care about incorrect queries, these additional queries do not hurt the query complexity, while writing the algorithm this way simplifies the analysis, particularly with grouping correct edges together. Obviously this is not an ideal approach in terms of time complexity, and there are several ways to get around this in practice. For example, we could instead call null outcomes incorrect and stop querying the list of a vertex $\va$ once we reach a null outcome. In this case the total number of incorrect guesses due to these queries would be at most O(n), which would not affect our query complexity.

 Applying the above guessing scheme to Gabow's algorithm,
 we prove our main result:

\begin{theorem} \label{thm:main}
    Given a graph $G$ with $m$ edges and $n$ vertices,
    there is a bounded error quantum algorithm that finds a maximum
    matching in $O(n^{7/4})$ queries in the matrix
    model and $O(n^{3/4}\sqrt{m+n})$ queries in the
    list model.{}
\end{theorem}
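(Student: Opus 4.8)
The plan is to apply \autoref{thm:guess} to the decision tree $\T$ induced by Gabow's algorithm (\autoref{list:overview}) under the guessing schemes described above. This requires three quantities: the depth $T$ of $\T$, the maximum number $I$ of incorrect guesses along any root-to-leaf path, and a confirmation that the leaves can encode the full $O(n)$-size matching (which is free by the remark following \autoref{thm:guess}). The quantum query complexity is then $O(\sqrt{TI})$, and we must show this is $O(n^{7/4})$ in the matrix model and $O(n^{3/4}\sqrt m)$ in the list model.

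First I would bound the depth $T$. Every query made by the algorithm corresponds to one node on a root-to-leaf path, so $T$ is at most the total number of queries the classical algorithm could make. Here the key technical point promised in the introduction comes in: one must carefully identify which steps of Gabow's algorithm actually touch the oracle. Recomputing the weights $w(u,v)$, maintaining the data structures for blossoms, the BFS/DFS bookkeeping, and testing tightness are all operations on already-queried information and cost nothing. The only genuine queries are the probes of adjacency (matrix model) or of successive neighbors (list model) inside Listings 2--5. Since there are $O(\sqrt n)$ outer iterations of the main loop and each iteration makes $O(n^2)$ matrix queries (resp. $O(m)$ list queries), we get $T = O(n^{5/2})$ in the matrix model and $T = O(\sqrt n\, m)$ in the list model.

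Next, and this is the heart of the argument, I would bound $I$, the number of incorrect guesses on any single path. In the matrix model the guess is always ``no edge,'' so an incorrect guess occurs only when a queried pair is actually an edge of $G$; along one path the distinct edges revealed number at most $\min(m, O(n^{3/2}))$ — the latter because, with $O(\sqrt n)$ iterations and the structural fact that each BFS/DFS within an iteration only needs to expose $O(n)$ relevant edges (those incident to the current search tree), one shows only $O(n^{3/2})$ edges are ever queried on a path. Hence $I = O(n^{3/2})$, giving $\sqrt{TI} = \sqrt{n^{5/2}\cdot n^{3/2}} = n^2$ — which is \emph{not} good enough, so the refinement must be that in a \emph{single} iteration only $O(n)$ incorrect guesses happen (each augmenting-path search advances through $O(n)$ vertices, and the modified guessing criteria are tailored so that a wrong guess corresponds to discovering a new useful vertex or blossom edge), whence $I = O(\sqrt n \cdot n/?)$; pinning down the exact per-iteration count so that $I = O(n)$ in the matrix case and the analogous bound in the list case is the crucial calculation. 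With $I = O(n)$ we get $\sqrt{T I}=\sqrt{n^{5/2}\cdot n} = n^{7/4}$, and in the list model $\sqrt{\sqrt n\, m \cdot n} = n^{3/4}\sqrt m$, as claimed.

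The main obstacle I anticipate is exactly this bound on $I$: showing that the carefully chosen guessing criteria force every incorrect guess to be ``charged'' to a distinct structural event (a newly found vertex added to $S$ or $S'$, or a newly discovered blossom edge), and that the total number of such events along any root-to-leaf path is $O(n)$ — not $O(n^{3/2})$. This is where the slight modifications to Gabow's algorithm and the precise definition of ``tight,'' ``outer,'' and ``blossom'' in the guess conditions do the real work. The remaining steps — verifying correctness of the (possibly modified) classical algorithm, and assembling $T$, $I$ into the final bound via \autoref{thm:guess} — are then routine. I would organize the proof as: (i) a lemma that the modified Gabow algorithm still outputs a maximum matching; (ii) a lemma bounding $T$; (iii) a lemma bounding $I$ in each model; (iv) a short combination invoking \autoref{thm:guess}.
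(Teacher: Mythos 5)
There is a genuine gap, and it sits exactly where you flagged your own uncertainty: the bound on $I$. Your target $I=O(n)$ over the \emph{entire} algorithm is not justified and is almost certainly unattainable. What the structural charging argument actually gives (and what the paper proves in \autoref{lemma:bfs} and \autoref{lemma:dfs}) is $O(n)$ incorrect guesses \emph{per call} of the BFS and DFS subroutines — each wrong guess is charged to a grow step (a new vertex added to $S$ or $S'$), a blossom merge (which decreases the number of blossoms, initially $n$), a completed sap, or a \texttt{null} list terminator. Since the main loop runs $O(\sqrt n)$ times and these charges reset each iteration ($S$, $S'$, the duals, and the blossom structure are rebuilt), the total is $I=O(n^{3/2})$, not $O(n)$. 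Indeed, if $I=O(n)$ held globally one could combine it with a depth of $n^2$ to get an $O(n^{3/2})$ algorithm matching the lower bound, which is well beyond what is claimed.

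The missing idea that rescues the calculation is on the $T$ side, not the $I$ side. You bound the depth by multiplying the per-iteration query count by the number of iterations, getting $T=O(n^{5/2})$ (matrix) and $O(\sqrt n\, m)$ (list). Instead, the algorithm keeps a classical record of all past queries and their outcomes and never re-queries the oracle on a previously seen input; then any root-to-leaf path contains at most one query node per vertex pair (matrix) or per list entry (list), so $T=O(n^2)$ and $T=O(m)$ respectively. With the correct pairing $T=O(n^2)$, $I=O(n^{3/2})$ one gets $\sqrt{TI}=O(n^{7/4})$, and $\sqrt{m\cdot n^{3/2}}=O(n^{3/4}\sqrt m)$ in the list model. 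Your decomposition into lemmas (correctness of the modified algorithm, depth bound, incorrect-guess bound, combination via \autoref{thm:guess}) is otherwise the right skeleton, and your per-iteration charging intuition is exactly how \autoref{lemma:bfs} and \autoref{lemma:dfs} proceed.
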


In the remainder of this section,
we explain enough of Gabow's algorithm to analyze 
the performance of the quantum algorithm
and to prove \Cref{thm:main}.
However, we do not address the correctness
of Gabow's algorithm or provide sufficient
details to understand why the algorithm is correct.
Instead, we encourage interested readers
to peruse Gabow's paper \cite{gabow_weighted_2017}.

The choice to not make this paper self-contained
is intentional: including the full details of
Gabow's algorithm would double the length of this work
without adding any novel contributions.

\subsection{Breadth-First Search Subroutine}
\label{sec:bfs}

The first phase of Gabow's algorithm is a simplified search
based on Edmonds' algorithm that explores $G$ breadth-first
\cite{edmonds_paths_1965}.
The goal is to identify all the edges that are on saps.
For this purpose, the algorithm maintains
a subgraph $S$ of $G$
with the vertices and edges that have been explored.
Initially, $S$ consists of only free vertices.
As the algorithm progresses, edges and vertices are added.
We call the set of edges and vertices connected to a
free vertex a \textit{search tree}.
The algorithm terminates once two search trees become
connected i.e. there is an augmenting path from
one free vertex to another.

The algorithm also maintains a record of the blossom that
contains $\va$, denoted by $B_{\va}$.
We initially set $B_{\va}=\va$ since every vertex is
a trivial blossom and redefine $B_{\va}$ when
merging blossoms.
When all tight pairs of vertices have been checked
and no sap has been found,
the dual variables are adjusted
to find new tight pairs of vertices.
If the dual variables cannot be adjusted,
there are no augmenting paths and
the partial matching is maximum.

The execution of the simplified search based on
Edmonds' algorithm
depends on the data structure of the input graph.
In the case of the matrix model
described in \Cref{list:bfs_matrix},
we first identify vertices $\va$ and $\vb$
that fit the criteria on Line 4.
We then query the edge $\vab$ only if
$\va$ and $\vb$ satisfy either the if-statement
on Line 5 or the if-statement on Line 8.
If we reach neither Line 6 nor Line 9
then no query is made in that iteration.
If we make a query on Line 6 or Line 9
and the edge is not present,
our guess is correct.
In order to bound the number of incorrect guesses,
we bound the number of times we reach Line 7
and Line 10 which happens only if $\vab$ is present
and is in the grow, blossom, or sap case.

% Matrix
\begin{minipage}{\linewidth}
\begin{lstlisting}[caption={Simplified Search based on Edmonds’ Algorithm in the Matrix Model \cite{gabow_weighted_2017}},label=list:bfs_matrix,mathescape=true, numbers=left]
for every vertex $\va$ do $d(\va) \leftarrow 1$
make every free vertex outer and add to $V(S)$
loop
  if $\exists$ tight pair $\va, \vb$ with $\va$ outer, $B_{\va} \neq B_{\vb}$ then
    if $\vb \notin V(S)$ then /* grow step */
      if $\vab \in E(G)$ /* query */ then
        add $\va \vb, \vb \vb'$ to $S$ where $\vb \vb' \in M$
    else if $\vb$ is outer then
      if $\vab \in E(G)$ /* query */ then
        if $\va$ and $\vb$ in the same search tree then
          /* blossom step */
          merge all blossoms in fundamental cycle of $\vab$
        else /* $\vab$ forms a sap */
          return /* continue in Listing 1 */
  else
    dual adjustment step
    /* no queries are made, see Gabow Figure 2 for details */
\end{lstlisting}
\end{minipage}

In the case of the list model described in \Cref{list:bfs_list},
we query from an outer vertex $\va$ and find some 
adjacent vertex $\vb$.
If $\va$ and $\vb$ are not tight,
$\va$ and $\vb$ are not from the same blossom
or neither of the criteria on Lines 9 and 11 apply,
then our guess is correct.
%In order 
We bound the number of incorrect guesses by the number of times we reach Lines 7,
10, and 12, which happens only if
we have reached the end of $\va$'s neighbors or $\va$ and $\vb$
are in the grow, blossom, or sap case.

Observe that we can group the correct
guesses in the list model into a single edge
in the decision tree because the algorithm's
behavior is the same in every case:
continue to query neighbors of $\va$.

\begin{lemma}\label{lemma:bfs}
    The simplified search of Edmonds' algorithm
    makes at most $O(n)$ incorrect guesses
    in a single call.
\end{lemma}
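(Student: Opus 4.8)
The plan is to bound, separately, the number of each kind of event that the guessing scheme charges as an incorrect guess. By the description of the scheme, an incorrect guess occurs in a call to the simplified search exactly when the algorithm reaches Line~7 or Line~10 of \autoref{list:bfs_matrix} in the matrix model, or Line~7, Line~10, or Line~12 of \autoref{list:bfs_list} in the list model. Reaching Line~7 (matrix) / Line~10 (list) means a grow step is executed; reaching Line~10 (matrix) / Line~12 (list) means a blossom step or the sap case occurs; and reaching Line~7 (list) means the search has run off the end of some node's adjacency list (the \texttt{null} case). So it suffices to show each of these counts is $O(n)$.

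For grow steps: each one is performed for a pair $u,v$ with $v\notin V(S)$ and then adds $v$, the matched edge $vv'\in M$, and hence $v'$, to $S$. Within a single call to the search, nodes are only ever added to $V(S)$, never removed, and $|V(S)|\le n$, so there are at most $n$ grow steps. For blossom steps: a blossom step is triggered only when $B_u\neq B_v$, so the fundamental cycle of $uv$ meets at least two distinct current blossoms, and merging them strictly decreases the number of distinct blossoms; since each node starts as its own trivial blossom the count starts at $n$ and never drops below $1$, so there are at most $n-1$ blossom steps. The sap case occurs at most once, as the algorithm immediately breaks (matrix) or returns (list). Together these already give at most $2n=O(n)$ incorrect guesses in the matrix model.

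In the list model it remains to bound how many times the search reads \texttt{null}. The point here is that the efficient implementation maintains a single pointer into each node's adjacency list and advances it monotonically over the whole call, so that each node's list is scanned at most once in total; this is also what guarantees that no edge is queried twice, which the bound implicitly needs. Hence the search reaches \texttt{null} at most once per node, contributing at most $n$ more incorrect guesses, for a total of at most $3n=O(n)$.

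I expect the last point to be the main obstacle: the literal \texttt{for}-loops in \autoref{list:bfs_list} suggest rescanning adjacency lists after every dual adjustment, so one must instead appeal to the pointer-based bookkeeping of Gabow's algorithm to argue that neither edge queries nor \texttt{null} reads are repeated across the (possibly many) rounds separated by dual adjustments. The blossom bound is the other step that relies on a structural fact rather than pure counting, namely that a fundamental cycle with $B_u\neq B_v$ necessarily spans at least two blossoms.
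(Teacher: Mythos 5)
Your proof is correct and follows essentially the same approach as the paper: decompose the incorrect guesses into the grow, blossom, sap, and (in the list model) \texttt{null} cases, and bound each by $n$ (or by $1$ for the sap case) via monotone growth of $V(S)$ and the decreasing blossom count. The only difference is that you justify the ``at most one \texttt{null} per node'' bound via pointer bookkeeping, whereas the paper simply asserts it and separately handles repeated queries by keeping a classical record of past query outcomes in the proof of the main theorem.
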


\begin{proof}
    As discussed above,
    in both the matrix and list models,
    a guess is incorrect only if
    we are in the grow, blossom, or sap case
    (or in the list model at the end of a list). 
    Therefore we bound the number of incorrect guesses
    by the number of times we can reach each case.
    In the grow case where $\vb \notin S$,
    we add both $\vb$ and $\vb'$ to $S$, where $\vb\vb'$
    is in the current partial matching $M$.
    Since this case only occurs when a vertex $\vb$ is not in $S$,
    and there are at most $n$ vertices in the graph,
    this case can trigger at most $n$ incorrect guesses.

    In the blossom case where $\va$ and $\vb$ are in the
    same search tree, we have merged at least two
    blossoms.
    Each vertex is initially a blossom
    so we start with a total of $n$ blossoms.
    Each time we merge two or more blossoms,
    we reduce the number of blossoms by 
    at least one.
    Therefore we can merge blossoms at most
    $n$ times, and so we can only make $n$
    incorrect guesses in this case.

    In the case where $\vab$ completes a sap,
    we halt the algorithm and so this may
    happen at most once per call.
    In the list model, we can reach the end
    of a list at most $n$ times so the number
    of incorrect guesses due to \texttt{null} outcomes is bounded by $n$.
\end{proof}

% List
\begin{minipage}{\linewidth}
\begin{lstlisting}[caption={Simplified Search based on Edmonds’ Algorithm in the List Model},label=list:bfs_list,mathescape=true, numbers=left]
for every vertex $\va$ do $d(\va) \leftarrow 1$
make every free vertex outer and add to $V(S)$
loop
  for every outer vertex $\va$ do
    for every vertex $\vb$ adjacent to $\va$ do
      if $\vb$ is null then /* end of list */
        break /* go to next $\va$ */
      else if $\va$ and $\vb$ are tight and $B_{\va} \neq B_{\vb}$ then
        if $\vb \notin V(S)$ then /* grow step */
          add $\va \vb, \vb \vb'$ to $S$ where $\vb \vb' \in M$
        else if $\vb$ is outer then
          if $\va$ and $\vb$ in the same search tree then
            /* blossom step */
            merge all blossoms in fundamental cycle of $\vab$
          else /* $\vab$ forms a sap */
            return /* continue in Listing 1 */
  dual adjustment step
  /* no queries are made, see Gabow Fig. 2 for details */
\end{lstlisting}
\end{minipage}

\subsection{Depth-First Search Subroutine}
\label{sec:dfs}

%The goal of 
In the second phase of the 
algorithm---the path-preserving depth-first
 search---we identify disjoint saps.
We define a 
%new 
subgraph $H$ of the complete graph
which we initialize
with the edges between every pair of
tight vertices in $S$.
(While many edges in $H$ were queried
in the breadth-first subroutine, not all were; in particular,
most edges between search trees
have not yet been queried.)
The algorithm explores $H$ from each free
vertex in order to find another free vertex.

% Matrix
\begin{minipage}{\linewidth}
\begin{lstlisting}[caption={Path-Preserving Depth-First Search in the Matrix Model \cite{gabow_weighted_2017}},label=list:dfs_matrix,mathescape=true, numbers=left]
initialize $P$ to an empty set
for each free vertex $f$ do
  if $f \notin V(P)$ then
    initialize $S'$ to an empty graph
    add $f$ to $S'$ as the root of a new search tree
    $find\_ap(f)$

procedure $find\_ap(\va:$ /* $\va$ is an outer vertex */
  for each edge $\vab \in E(H) \setminus M$ do
    if $\vb \notin V(S')$ then
      if $\vab \in E(G)$ /* query */ then
        if $\vb$ is free then /* $\vb$ completes a sap */
          add $\vab$ to $S'$ and sap to $P$
          terminate all current recursive calls to $find\_ap$
          remove all edges of sap from $H$
          recursively remove all dangling edges from $H$
        else /* grow step */
          add $\vab, \vb \vb'$ to $S'$ where $\vb \vb' \in M$
          $find\_ap(\vb')$
          /* accessible only if $\vb'$ is not on a sap */
          remove $\vb$ and $\vb'$ from $H$
      else
        remove $\vab$ from $H$
        recursively remove all dangling edges from $H$
    else if blossom found then
      if $\vab \in E(G)$ /* query */ then
        blossom procedure /* see Gabow Fig. 4 for details */
        /* calls $find\_ap(\va)$ from each vertex $\va$ in blossom */
      else
        remove $\vab$ from $H$
        recursively remove all dangling edges from $H$
\end{lstlisting}
\end{minipage}

While $H$ contains edges on saps, one edge can be on
more than one sap. 
This is a problem, as we need disjoint saps in order 
to augment the partial matching.
To account for this, using recursive calls, 
the depth-first search explores $H$ from a single free
vertex and forms
a new subgraph $S'$ of visited vertices along the way.
Once another free vertex is found from the starting free vertex,
the algorithm processes the sap and terminates all
current calls, disallowing edges of the present sap from
being used in future saps and reinitializing $S'$.
Then another call is made from a new free vertex.
If the algorithm identifies a vertex on a blossom that
has already been explored, 
new recursive calls are initiated from each vertex
on the blossom.
% (See Gabow for the details of the blossom step
% \cite{gabow_weighted_2017}.)

We maintain the property that all
edges in $H$ are on as yet unidentified
saps by deleting
edges and vertices in several cases:
When we find a sap, we remove all the edges
and vertices along it.
Thus no remaining sap in $H$ can share
an edge with one that was already found.
When we query an edge that is not present,
we remove it from $H$. 
When the recursive call does not find a
sap containing vertex $\va$, we remove $\va$ and its adjacent edges.
After deletions, some \textit{dangling} edges
may remain in $H$.
A dangling edge has an adjacent vertex with degree one
(as a result of a deletion) that is not a free vertex.
We remove dangling edges from $H$ by recursively deleting
the edge and adjacent vertex with degree one in addition
to resulting dangling edges.

% List
\begin{minipage}{\linewidth}
\begin{lstlisting}[caption={Path-Preserving Depth-First Search in the List Model},label=list:dfs_list,mathescape=true, numbers=left]
initialize $P$ to an empty set
for each free vertex $f$ do
  if $f \notin V(P)$ then
    initialize $S'$ to an empty graph
    add $f$ to $S'$ as the root of a new search tree
    $find\_ap(f)$

procedure $find\_ap(\va):$ /* $\va$ is an outer vertex */
  for every vertex $\vb$ adjacent to $\va$ do
    if $\vb$ is null then /* end of list */
      break /* go to origin of current call to $find\_ap$ */
    else if $\vab \in E(H) \setminus M$ then
      if $\vb \notin V(S')$ then
        if $\vb$ is free then /* $\vb$ completes a sap */
          add $\vab$ to $S'$ and sap to $P$
          terminate all current recursive calls to $find\_ap$
          remove all edges of sap from $H$
          recursively remove all dangling edges from $H$
        else /* grow step */
          add $\va \vb, \vb \vb'$ to $S'$ where $\vb \vb' \in M$
          $find\_ap(\vb')$
          /* accessible only if $\vb'$ is not on a sap */
          remove $\vb$ and $\vb'$ from $H$
      else if blossom found then
        blossom procedure /* see Gabow Figure 4 for details */
        /* calls $find\_ap(\va)$ from each vertex $\va$ in blossom */
\end{lstlisting}
\end{minipage}

Gabow's original version of the path-preserving
depth-first search does not need to maintain the property
that all edges in $H$ are on as yet unidentified saps
since other edges can be weeded out through the course
of the algorithm.
Since our goal is to bound costly ``incorrect'' queries,
we cannot afford to wait to remove
these edges and must preemptively do so. We need to ensure that 
this modification does not affect the correctness of the algorithm, but
it is easy to see that the edges we remove
from $H$ (described in the previous paragraph)
can not be part of any as yet undiscovered disjoint saps. 
Since the purpose of this subroutine is to discover a set of disjoint saps,
this modification does not affect the correctness of this phase. This change might
affect the runtime, but as we are concerned with query complexity rather
than time complexity, we will not further analyze the runtime consequences.

The 
path-preserving
depth-first search depends
on the data structure of the input graph.
In the case of the matrix model described in \Cref{list:dfs_matrix},
we identify vertices $\va$ and $\vb$
that fit the criteria on Line 9 and either
Line 10 or Line 25.
We then query the edge $\vab$ on Line 11
or Line 26.
If the edge is not present, our guess is correct.
In order to bound the number of incorrect guesses,
we bound the number of times we reach Line 12
and Line 27,
which happens only if $\vab$ is present
and completes a sap, triggers a grow step, or
forms a blossom.

In the case of the list model described in \Cref{list:dfs_list},
we query from outer vertex $\va$ and
find some adjacent vertex $\vb$.
If $\va$ and $\vb$ are not tight,
$\va$ and $\vb$ share a matched edge, or
neither of the criteria on Lines 13 and 24 apply,
then our guess is correct. 
While there might be multiple query outcomes 
that count as correct,
the algorithm behaves the same in each case: 
continue to query the next neighbor of $\va$.
In order to bound the number of incorrect guesses,
we bound the number of times we reach Lines 11,
13, and 25,
which happens only if we have reached the end
of $\va$'s neighbors or $\va$ and $\vb$ complete
a sap, trigger a grow step, or form a blossom.

\begin{lemma}\label{lemma:dfs}
    The path-preserving depth-first search makes at most
    $O(n)$ incorrect guesses in a single call.
\end{lemma}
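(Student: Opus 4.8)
The plan is to follow the template of the proof of \autoref{lemma:bfs}: enumerate the situations in a single execution of \autoref{list:dfs_matrix} or \autoref{list:dfs_list} that can produce an incorrect guess, and bound the number of occurrences of each by $O(n)$. As the discussion before the lemma records, an incorrect guess arises only when a queried edge $uv$ turns out to be present and $u,v$ trigger the sap, grow, or blossom case, or---in the list model---when a query returns \texttt{null} because $u$'s adjacency list has been exhausted. So it suffices to bound, over the whole second phase, the number of sap completions, grow steps, blossom merges, and (list model) list exhaustions.

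I would dispatch the easy case first: every sap that is discovered is added to $P$, the saps in $P$ are vertex-disjoint, and each uses two free endpoints that thereafter lie in $V(P)$, which the outer loop skips; hence at most $n/2$ saps are ever completed, giving $O(n)$ incorrect guesses. For grow steps I would invoke the preemptive-deletion invariant that motivated the modified algorithm: $H$ only ever loses vertices and edges over the course of the phase, and when a grow step adds $uv, vv'$ to $S'$ and recurses via $find\_ap(v')$, the vertex $v$ is deleted from $H$ before control returns past that grow step---either through the explicit removal of $v$ and $v'$ when the recursion returns without a sap, or, if $v'$ does lie on a sap, because the matched edge $vv'$ then also lies on that sap (an augmenting path carries a matched edge at every internal vertex, and $v'$, being matched, is internal) and every vertex on a sap is deleted from $H$. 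Since a grow step requires $uv \in E(H)\setminus M$ and no vertex re-enters $H$, each vertex is the target of at most one grow step, so there are at most $n$ grow steps. For blossom merges I would argue within each outer-loop iteration that actually builds a fresh $S'$ (there are at most $n$ such iterations, one per free node, and $S'$ is not re-initialized during one of them): each merge fuses at least two current blossoms of $S'$ into one, so the number of merges in that iteration is at most $|V(S')|-1$, which is at most one (the root) plus twice the number of grow steps performed in that iteration; summing over iterations and using the global bound of $n$ grow steps yields $O(n)$ merges in both models. Finally, in the list model, reaching the end of $u$'s list is the single query $E_L(u,\deg(u))=\texttt{null}$, which is issued---and recorded in the decision tree---at most once per vertex $u$; a revisit of $u$ through a blossom procedure generates no new query and hence no new incorrect guess, so list exhaustions contribute at most $n$. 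Adding the four $O(n)$ bounds finishes the proof.

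I expect the grow-step bound to be the delicate point: one must be certain that $find\_ap(v')$, which may launch blossom procedures and nested recursive calls before it terminates, cannot return a sap that bypasses $v$, and that no competing deletion leaves the accounting inconsistent; this rests on $v'$ being the child of $v$ in $S'$ together with the alternation structure of augmenting paths through $vv'$, as sketched above. The remaining steps are bookkeeping, although I would also verify---as was done for \autoref{list:bfs_list}---that collapsing all ``correct'' list-query outcomes into a single decision-tree edge is legitimate here, i.e. that after a correct guess the algorithm always does the same thing (advance to the next neighbor of $u$) regardless of which neighbor the oracle returned.
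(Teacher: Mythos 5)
Your proposal is correct and follows essentially the same route as the paper's proof: charge every incorrect guess to a sap completion, a grow step, a blossom merge, or (in the list model) a list exhaustion, and bound each by $O(n)$ via the deletion of vertices from $H$, the decrease in the number of blossoms, and the once-per-vertex \texttt{null} outcome. Your treatment of the grow case (arguing that $v$ is removed from $H$ whether or not $v'$ ends up on a sap, using the alternation structure through the matched edge $vv'$) merges the paper's two sub-cases into one and is, if anything, slightly more explicit than the paper's one-to-one-correspondence argument, but it is the same idea.
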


\begin{proof}
    In both the matrix and list models, a guess is incorrect
    only if we are in the sap, grow, or blossom case.
    Therefore we bound the number of incorrect guesses
    by the number of times we can reach each case.
    If $\vb$ is a free vertex, we
    have found a sap and immediately remove
    $\va$ and $\vb$ from $H$
    since they lie on a sap we have found.
    Thus we can bound the number of incorrect guesses
    in this case by the number of free vertices which
    is in turn bounded by $n$.

    If $\vb$ is not a free vertex,
    $\vb$ may either be on a sap or not.
    Note that since $\vab$ is tight, it could
    be on a sap but if another edge further on the 
    potential sap is not present or the potential sap
    overlaps with a sap already in $P$ we say that $\vb$
    is not on a sap.

    If $\vb$ is not a free vertex and is on a sap,
    we remove $\va$ and $\vb$ from $H$
    once the sap is found.
    Observe that there is a one-to-one correspondence
    between the edge $\vab$ and the vertex $\vb$.
    That is, since $\vb$ is now in $S'$, we will not
    process another edge $\vc\vb$ for some vertex $\vc$.
    It follows that the number of incorrect guesses in this
    case is bounded by the number of vertices $n$.

    If $\vb$ is not a free vertex and is not on a sap,
    we will return from the call
    and remove $\vb$ and $\vb'$ from $H$ 
    (see Line 21 in \Cref{list:dfs_matrix},
    Line 23 in \Cref{list:dfs_list}).
    We can safely remove these vertices because $\vb'$
    is not on a sap and for $\vb$ to be on a sap,
    there would be two consecutive unmatched
    edges which is a contradiction.
    Then the number of incorrect guesses in this case
    is bounded by the number of vertices we can remove which
    is $n$.

    If $\va$ and $\vb$ form a blossom then we can
    bound the number of incorrect guesses
    by the number of times blossoms can be
    merged which is in turn bounded by $n$,
    the number of blossoms initially present.
    In the list model, we can reach the end
    of a list at most $n$ times so the number
    of incorrect guesses due to \texttt{null} outcomes
    is bounded by $n$.
\end{proof}

We now combine the two lemmas to prove our main result.

\begin{proof}[of \Cref{thm:main}]
    The guessing scheme is described above the statement
    of \Cref{thm:main}.
    We create a decision tree using \Cref{list:overview}.
    The depth of the decision tree is the total number
    of queries we would need to make to learn the graph $G$.
    In the matrix model, this is $n^2$.
    In the list model, this is $m+n$ because we need to check
    each vertex and all the edges in its adjacency list.
    We can ensure this bound by keeping a classical
    record of our queries and query outcomes and,
    before querying the oracle, 
    checking whether we have made
    this query before.
    By \Cref{lemma:bfs}, \Cref{lemma:dfs}, and
    the $O(\sqrt{n})$ bound on the number of iterations,
    the number of incorrect guesses is bounded by
    $O(n\sqrt{n})$.
    Then \Cref{thm:main} follows from \Cref{thm:guess}.
\end{proof}

\section{Conclusion}

We used a classical maximum matching algorithm
and the guessing tree method
to give a $O(n^{7/4})$ query bound in the matrix
model and $O(n^{3/4}\sqrt{m+n})$ query bound in
the list model for maximum matching on
quantum computers and general graphs.
Our result narrows the gap between the previous
trivial upper bounds of $O(n^2)$
and $O(m)$ and the quantum query complexity
lower bound of $O(n^{3/2})$.
An important open problem is to determine whether 
this algorithm is optimal.
Progress on this question could be made by 
improving the lower bound, perhaps using 
the general adversary bound \cite{hoyer_negative_2007}.

Another open problem is to bound the time
complexity of the guessing tree method. 
Such a result would then allow us 
to compare the maximum matching algorithm described
in this paper to existing quantum maximum matching algorithms 
that aim to minimize time complexity rather than query complexity. 
The time complexity of implementing the guessing tree
method is currently unknown.
The guessing tree algorithm is based on
the dual adversary bound \cite{beigi_quantum_2020}, and 
the quantum algorithm that results is an alternating sequence
of input-dependent and input-independent unitaries,
at least in the binary case
\cite{reichardt_span_2009,lee_quantum_2011}.
While the input-dependent unitary is 
simply the oracle and may
be applied in constant time,
the time complexity of the input-independent unitary
depends on finding an efficient
implementation of a quantum walk on the decision tree.
The guessing tree algorithm is similar to the 
\textit{st}-connectivity span program
algorithm, for which a relationship between query and time complexity is known
\cite{jeffery_quantum_2017}.
The scaling between time and query complexity in that algorithm depends
on the time complexity of implementing a quantum walk on the decision tree and on the 
spectral gap of the normalized Laplacian of the decision tree. It would be interesting
if a similar relationship holds for the guessing tree algorithm, and if so, how it 
applies to the specific case of maximum matching.

\bibliography{references}

\end{document}